\documentclass{article}
\usepackage{arxiv}

\usepackage{graphicx}     

\usepackage[square, numbers]{natbib}

\usepackage{amsmath} 
\usepackage{amssymb}  

\usepackage{amsthm}
\usepackage{color,soul}
\usepackage{enumitem}
\usepackage{flushend}
\usepackage{cite}
\usepackage{url}
\usepackage{mathrsfs}  

\usepackage{algorithm} 
\usepackage{algorithmic}  
\usepackage[linesnumbered,algo2e,ruled,vlined,norelsize]{algorithm2e} 

\allowdisplaybreaks

\newtheorem{thmm}{Theorem}

\newtheorem{remm}{Remark}
\newtheorem{assm}{Assumption}

\newtheorem{Lemm}{Lemma}

\title{Merging Large Language Models and Battery Physics for User-Aware Electric Vehicle Driving Management} 
\author{Yukta Pareek\\
Department of Mechanical Engineering\\ The Pennsylvania State University\\ University Park, Pennsylvania 16802, USA\\E-mail: ybp5153@psu.edu.\\	
\And
Yasaman Masoudi\\
Stellantis US LLC\\  Auburn Hills, Michigan 48326, USA\\E-mail: yasaman.masoudi@stellantis.com.\\		
\And
Satadru Dey\\
Department of Mechanical Engineering\\ The Pennsylvania State University\\ University Park, Pennsylvania 16802, USA\\E-mail: skd5685@psu.edu.\\}

\begin{document}

\maketitle







\begin{abstract}                
Electric vehicle (EV) battery performance is strongly coupled with driver behavior, yet human intent is typically expressed semantically rather than numerically. This paper proposes a hybrid physics–artificial intelligence framework that integrates a Large Language Model (LLM) as a high-level behavioral reasoning layer within a physics-driven supervisory architecture. The LLM interprets textual user intent and structured battery feedback to generate bounded behavioral parameters that shape a discharge current envelope. A physics-driven safety filter then enforces physical safety constraints before computing feasible velocity recommendations. Lyapunov-based analysis establishes bounded recommendation error under battery model and prompt inaccuracies. Simulation results demonstrate adaptive, user-aware operation without compromising physical safety. The proposed reasoning–enforcement architecture provides a principled pathway for safe AI integration in EV energy management.
\end{abstract}


\section{Introduction}





Most modern EVs rely on battery packs as the main power source whose power capability, thermal stability, and degradation dynamics are tightly coupled to real-time usage/driving patterns. Driver behavior, such as acceleration intensity, cruising speed, and transient power demand, directly influences battery current peaks, internal temperature rise, and long-term state-of-health evolution. Consequently, EV operation can be viewed as an interaction between human intent and physical constraints of the battery pack. Incorporating user awareness into EV management systems is therefore critical for achieving safe, efficient, and longevity-preserving operation. 


In connection with user-awareness and EV operation, there is a substantial literature on EV charging behavior -- factors influencing when, where, and how users charge (price, convenience, range anxiety, time-of-day), and the implications for grid operation and charging infrastructure planning \citep{fotouhi2019general,franke2013understanding, hu2019modeling}. These works typically focus on charging choice and not on the driving behaviour (acceleration/braking patterns) that directly affects instantaneous battery power and thermal stress. Separate from charging studies, decades of traffic research provide microscopic driver models (car-following, lane-changing) such as the Intelligent Driver Model (IDM)\citep{alhariqi2022calibration} and MOBIL for lane changes \citep{kesting2007general}. These models capture acceleration/deceleration behavior and inter-vehicle interactions and are widely used in traffic micro-simulation. IDM and its variants are useful to generate plausible velocity/acceleration time series, which can be mapped to instantaneous power demands for vehicle energy consumption studies.
Furthermore, a key challenge in capturing user-awareness in driving behavior is that human intent, urgency, and preference are often expressed semantically rather than numerically, making them difficult to encode in compact mathematical form. Effective EV supervision therefore requires integration of multi-modal information: semantic user descriptions and numerical battery-vehicle states. Since traditional control architectures are not designed to fuse such heterogeneous modalities, we explore the effectiveness of LLMs which can be suitable in handling such multi-modal information.

\begin{figure*}[h]
    \centering    
    \includegraphics[width=\textwidth]{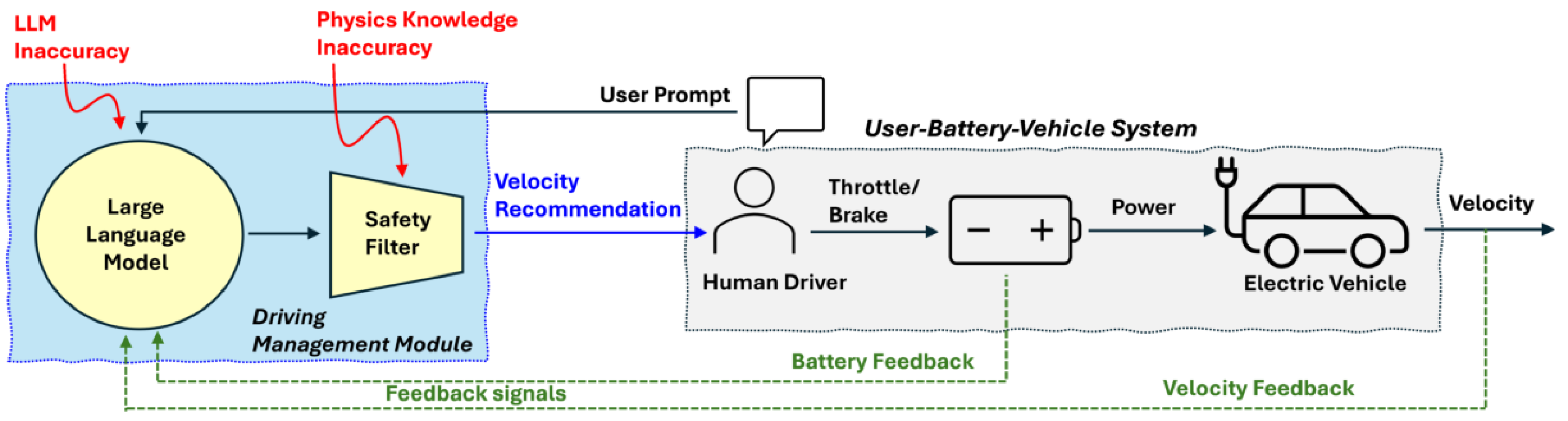}
    \caption{Schematic of EV driving management system.}
    \label{block_diagram}
\end{figure*}

LLMs are transformer-based sequence models trained via large-scale next-token prediction \citep{vaswani2017attention}. Through attention mechanisms, they encode semantic structure \citep{chen2024evaluating} and exhibit emergent reasoning capabilities under appropriate prompting \citep{wei2022chain}. However, a key limitation of LLMs is that they do not inherently encode physical conservation laws, stability constraints, or electrochemical dynamics. Prior analyses have noted that purely language-trained models may lack a consistent physical world-model, potentially producing physically implausible outputs when extrapolating beyond training distributions \citep{marcus2019rebooting}. Accordingly, safety-critical deployment requires embedding LLM outputs within physics-driven constraint enforcement.

In summary, the existing literature suffers from one or more of the following limitations: (i) Some works consider EV drivers' charging behavior, however, EV driving behavior remain underexplored. (ii) Most of these works do not consider the coupled dynamics of EV user and battery -- which is critical in understanding the broader dynamical effects experienced by battery systems. In light of the aforementioned research gaps, the main contribution of this work is as follows: \textit{We propose a framework for EV driving  management which captures the coupled dynamical nature of EV user-battery system -- unifying battery physics (via physics-based models) and human driver dynamics (via LLM-driven user models).} Specifically, we present a driving management system where the driver feeds an initial user prompt to an LLM to generate driving recommendation which is subsequently passed through a physics-driven safety filter to ensure safe recommendations. The LLM receives feedback of the driver's velocity tracking performance and battery states, and periodically updates driving recommendation. Using Lyapunov's stability theory, we also analyze the bounds of recommendation errors from this driving management system, which could potentially arise from inaccurate knowledge of the user and imperfect knowledge of battery physics. The remainder of the paper is structured as follows: Section 2 presents the driving management framework, Section 3 analyzes potential recommendation errors of the proposed framework, Section 4 presents simulation results, and Section 5 concludes the work. \textbf{Notation:} The subscript $k$ indicates discrete $k$-th time instant and $\left\| . \right\|$ indicates the $L_2$ vector norm.

\section{Driving Management Framework}

A schematic of the proposed driving management module is shown in Fig. \ref{block_diagram}. A brief working principle of the module is given as follows: In the beginning of the trip, the driver provides LLM an initial prompt of their plan. Based on this prompt, driver usage history, and battery states information, LLM suggests a battery current. This recommended current then passes through a safety filter to ensure that such current application leads to a safe operation in terms of battery dynamics. If the recommended current does not pass this safety check, the recommended current is scaled down until the scaled version passes safety check. After that, the safe current is used to simulate vehicle dynamical model to generate a corresponding safe velocity -- which is then provided to the driver as the recommended velocity. This process of is updated periodically. That is, the LLM receives driver's velocity tracking data as well as real-time battery states and in turn generates recommended battery current for each period. 

\vspace{2mm}

\begin{remm}
    This approach essentially constitutes a hybrid physics-Artificial Intelligence framework for user- and battery-aware operation of EVs. Here, an LLM is integrated as a high-level behavioral reasoning layer, while all real-time safety and dynamics are enforced by lower-level physics-based filters. The key design principle is a strict separation between reasoning and enforcement. LLM outputs a battery current envelope that reflects user behavior and recent battery response. A physics-based safety filter then generates and execute feasible commands within this envelope. This separation ensures safety, interpretability, and robustness while enabling contextual adaptation that is difficult to encode using fixed rules.
\end{remm}

\subsection{Mathematical Representation of User-Battery-Vehicle}

We consider a longitudinal dynamical model for vehicle motion \citep{eriksson2014modeling}:
\begin{align}
    {v}_{k+1} = v_{k} +K_1 F_{{drive}_k} - K_2v_k^2-K_3, \label{vd} 
\end{align}
where $v$ is the longitudinal velocity (m/s) and $m$ is the mass (kg); $K_1 = \Delta t \frac{1}{m}, K_2 = \frac{K_1}{2}C_dA_f\rho_{a}, K_3 = K_1C_rmg$ with $\Delta t$ being the sample time; $F_{drive}$ is the powertrain force (N) due to the user's driving command; the second and third terms on the right hand side represent the aerodynamic drag force and rolling resistance force, respectively; and $C_d,A_f,\rho_{a},C_r,g$ are the drag coefficient, frontal area, air density, rolling coefficient, and gravitational constant, respectively. The powertrain force $F_{drive}$ is generated from the vehicular battery pack, whose dynamics is modelled by the approximated electrochemical Single Particle Model (SPM) \citep{santhanagopalan2006online,dey2015nonlinear}, as given below:
\begin{align}
    {b}_{k+1} = A b_k + B I_k, \ P_k = g(b_k,I_k),\label{ssbatt} 
\end{align}
where $b$ is the Lithium concentration vector (mol/m$^3$) in a single cell, $I$ is the cell current (A), $P$ is the pack power that captures the cell-to-pack conversion due to series-parallel cell arrangements in the pack, the system matrices $A$, $B$, and the function $g(.)$ are formed by approximating the SPM as discussed in \citep{dey2015nonlinear}. 

\subsection{LLM-Based User-Aware Recommendation System}
The vehicle and battery dynamics introduced in Section 2.1 describes how physical quantities evolve once a driving command is applied. However, they do not specify how such a command should be generated in a user-aware manner. In this work, we introduce LLM as a high-level reasoning module that interprets human intent, takes battery feedback and translates it into physically meaningful control recommendations. 
At the beginning of a trip, the driver provides a semantic description of intent (e.g., “30-minute city drive,” “running late”). This textual input is not directly convertible into numerical control parameters using classical rule-based controllers. Instead, it is processed by the LLM, which interprets user intent jointly with numerical battery feedback and generates two interpretable behavioral parameters. For our analysis, we will use the following abstraction of LLM:
\begin{align}
    [\alpha_k,\beta_k]^T = f_{LLM}(\theta_k, T_k, \mathcal{B}_k), \label{llm}
\end{align}
where $\alpha,\beta$ are the LLM outputs, $\theta$ represents the internal parameterization of LLM, $T$ is the vector representation of user prompt, $\mathcal{B}$ is the periodic battery feedback supplied to the LLM, and $f_{LLM}(.)$ is the functional representation of LLM. Note that the actual text sequences are converted to vector representations through some embedding functions in LLMs \citep{tang2024understanding}. Parameter $\alpha \in [0,  \alpha_{max}]$ scales the base discharge capability of the battery and captures user-driven aggressiveness in acceleration demand. Larger values of $\alpha$ correspond to higher permissible current envelopes. Parameter $\beta \in [0, 1]$ captures behavioral uncertainty, and is used to tighten the current envelope to ensure robustness under ambiguous conditions. The LLM output $[\alpha_k,\beta_k]$ is translated into a current envelope using a  deterministic mapping: 
\begin{align}
    &{I}_{r_k} = h_{LP}({\alpha}_k,{\beta}_k) = {\alpha}_k I_{b}(1-\kappa {\beta}_k), \label{I_LLM_calculation}
\end{align}
 where $I_{b}$ is a nominal pack discharge limit and $\kappa$ is an user-defined parameter. The current envelope $I_{r}$ is then passed to the safety filter block, which enforces all physical constraints before generating a velocity recommendation described next.

\subsection{Physics-Driven Safety Filter}

The safety filter acts as the enforcement layer between the LLM reasoning module and the physical battery-vehicle system as described in Section 2.1 and 2.2. Its primary objective is to ensure that all recommended actions satisfy voltage constraints derived from the battery model \eqref{ssbatt}. The safety filter computes a feasible current command.
\begin{align}
    I_{s_k} = \gamma(V_k)I_{r_k}, \label{I_safe calculation_1} 
\end{align}

where $\gamma(V_k) \in [0,1 ]$ is a continuous scaling function such that
$\gamma(V_k) = 1$ when voltage margin is sufficient, and $\gamma(V_k) \to 0$ as $V_k \to V_{min}$. Next, the filter computes the feasible mechanical power as
\begin{align}
    P_{k} = g(V_k,I_k), \label{P_mech_safe_1}
\end{align}
where the function $g(.)$ depends on drivetrain efficiency $\eta$. Using $F_{{drive}_k}=P_{k}/{v}_{r_{k}}$ the longitudinal vehicle model \eqref{vd} the recommended velocity $v_r$ is computed.
The safety filter therefore determines the recommended velocity command which is provided to the driver, completing one supervisory cycle.

\section{Analysis of Recommendation Errors}

In this section, we analyze the performance of the proposed driving management system in terms of error between \textit{ideal} recommendation and \textit{actual} recommendation. We consider two major sources that contribute to the difference between these two quantities: 
\begin{itemize}
    \item \textit{Inaccurate battery model estimate} which arises from imperfect knowledge of physics.
    \item \textit{Inaccurate user behavior estimate} which arises from the imperfect prompting and/or inaccurate usage history provided to the LLM.
\end{itemize}
We frame this analysis into the following problem:

\noindent \textbf{Problem Statement:} Given the \textit{inaccurate battery model estimate}, quantified by the deviations in system matrices $\delta_A$ and $\delta_B$ in \eqref{ssbatt}, and the \textit{inaccurate user behavior estimate}, quantified by the deviations in LLM parameterization $\delta_\theta$ and deviations in user prompt $\delta_T$ in \eqref{llm} -- what is the upper bound of the error between \textit{ideal} recommendation ($\bar{v}_{{r}}$) and \textit{actual} recommendation ($v_{r}$)?

Consider the \textit{ideal} recommendation ($\bar{v}_{{r}}$) is generated through the following sets of algorithms:
\begin{align}
    & [\bar{\alpha}_k,\bar{\beta}_k]^T = f_{LLM}(\theta_k+\delta_\theta, T_k+\delta_T, \mathcal{B}_k), \label{lp-21} \\
    & \bar{I}_{r_k} = h_{LP}(\bar{\alpha}_k,\bar{\beta}_k), \label{lp-22}\\
    & {\bar{b}}_{k+1} = (A+\delta_A) \bar{b}_{k} + (B+\delta_B) \bar{I}_{r_k}, \ \bar{P}_k = g(\bar{b}_k,\bar{I}_{r_k}) \label{lp-23}\\
    & \bar{v}_{r_{k+1}} = \bar{v}_{r_{k}} +K_1 \bar{P}_k/\bar{v}_{r_{k}} - K_2\bar{v}_{r_{k}}^2-K_3 \label{lp-24} 
\end{align}

Next, consider the \textit{actual} recommendation (${v}_{{r}}$) is generated through the following sets of algorithms:
\begin{align}
    & [{\alpha}_k,{\beta}_k]^T = f_{LLM}(\theta_k, T_k, \mathcal{B}_k), \label{lp-31} \\
    & {I}_{r_k} = h_{LP}({\alpha}_k,{\beta}_k), \label{lp-32}\\
    & {{b}}_{k+1} = A {b}_{k} + B {I}_{r_k}, \ {P}_k = g({b}_k, {I}_{r_k})\label{lp-33} \\
    & {v}_{r_{k+1}} = {v}_{r_{k}} + K_1 {P}_k/{v}_{r_{k}} - K_2{v}_{r_{k}}^2-K_3. \label{lp-34}
\end{align}

Our goal is to analyze the bound on the velocity recommendation error $e_{v_k} = \bar{v}_{{r}_k} - {v}_{{r}_k}$. We will denote the error between the battery states under these two cases, that is, $e_{b_k} = \bar{b}_{k}-{b}_{k}$. Considering the dynamical nature of the overall \textit{user-battery-vehicle} system as well as the inherent dynamics of \textit{LLM-Safety Filter} recommendation system, we leverage Lyapunov stability theory for discrete-time systems \citep{haddad2008nonlinear} to perform the above-mentioned error analysis.

\vspace{2mm}

\begin{assm}
The composite function $hf (.) = h_{LP} \circ f_{LLM}(.)$ in \eqref{lp-21}, \eqref{lp-22}, \eqref{lp-31}, and \eqref{lp-32} is Lipschitz continuous, as described below:
    \begin{align}
        & \left\| hf(\lambda_k + \delta_{\lambda}, \mathcal{B}_k) -  hf(\lambda_k, \mathcal{B}_k) \right\| \leqslant L_c \left\| \delta_{\lambda} \right\|, \label{lip-1}
    \end{align}
    where $\lambda_k = [\theta_k,T_k]^T$, $\delta_{\lambda} = [\delta_\theta,\delta_T]^T$, and $L_c = L_{HP}L_{\lambda}$ with $L_{\lambda}$ and $L_{HP}$ being the Lipschitz constants for the functions $f_{LLM}(.)$ and $h_{LP}(.)$, respectively.
\end{assm}

\vspace{2mm}

\begin{assm}
The function $g (.)$ in \eqref{lp-23} and \eqref{lp-33} is Lipschitz continuous, as described below:
    \begin{align}
        & \left\|  g(\bar{b}_{k},\bar{I}_{r_k}) -  g({b}_{k},{I}_{r_k})\right\| \leqslant L_1 \left\|\bar{b}_k- {b}_k\right\| + L_2 \left\|\bar{I}_{r_k}- {I}_{r_k}\right\|, \label{lip-2}
    \end{align}
    where $L_1$ and $L_2$ are the Lipschitz constants.
\end{assm}

\vspace{2mm}

\begin{remm}
The Lipschitz continuity of the composite function $hf(.) = h_{LP} \circ f_{LLM}(.)$ depends on the Lipschitz continuity of individual functions. Lipschitz continuity of $h_{LP}(.)$ can be ensured by choosing a smooth conversion function from LLM outputs to recommended current. For the LLM function $f_{LLM}(.)$, {Lipschitz continuity can be justified under standard neural network regularity assumptions. Transformer architectures consist of affine mappings, softmax attention operators, and smooth nonlinear activations \citep{virmaux2018lipschitz}. Each of these components is Lipschitz continuous over bounded domains.} The Lipschitz continuity of battery power $g (.)$ can be confirmed following the approach mentioned in \citep{dey2015nonlinear}, given the smoothness of battery voltage expression.   
\end{remm}

\vspace{2mm}

\begin{Lemm}
If Assumptions 1 and 2 are true, the function $\tilde{P}_k \triangleq \bar{P}_k -P_k$ is upper bounded by the following: $\left\| \tilde{P}_k \right\| \leqslant L_1 \left\|e_{b_{k}}\right\| + L_2 L_c \left\| \delta_{\lambda} \right\|.$
\end{Lemm}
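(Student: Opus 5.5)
The plan is a direct substitution argument, since the bound follows from chaining the two Lipschitz assumptions. By the definitions in \eqref{lp-23} and \eqref{lp-33},
\begin{align*}
\tilde{P}_k = \bar{P}_k - P_k = g(\bar{b}_k,\bar{I}_{r_k}) - g(b_k,I_{r_k}).
\end{align*}
Applying Assumption 2, i.e.\ \eqref{lip-2}, bounds $\|\tilde{P}_k\|$ by $L_1\|\bar{b}_k-b_k\| + L_2\|\bar{I}_{r_k}-I_{r_k}\|$. The first term is $L_1\|e_{b_k}\|$ by the definition of $e_{b_k}$.

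It remains to bound the current-envelope mismatch $\|\bar{I}_{r_k}-I_{r_k}\|$. Composing \eqref{lp-21}--\eqref{lp-22} and \eqref{lp-31}--\eqref{lp-32}, and writing $\lambda_k=[\theta_k,T_k]^T$, we get $\bar{I}_{r_k} = hf(\lambda_k+\delta_\lambda,\mathcal{B}_k)$ and $I_{r_k} = hf(\lambda_k,\mathcal{B}_k)$. The key observation is that both evaluations use the \emph{same} battery feedback $\mathcal{B}_k$. Hence Assumption 1, i.e.\ \eqref{lip-1}, applies directly and gives $\|\bar{I}_{r_k}-I_{r_k}\|\le L_c\|\delta_\lambda\|$. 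Substituting this into the previous inequality yields the claimed bound
\begin{align*}
\|\tilde{P}_k\| \le L_1\|e_{b_k}\| + L_2L_c\|\delta_\lambda\|.
\end{align*}

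The only step needing care is the justification that $\mathcal{B}_k$ is common to both evaluations, since the ideal and actual systems have different battery states. I would rely on the problem setup as written in \eqref{lp-21} and \eqref{lp-31}, where $\mathcal{B}_k$ appears identically in both. If one instead wanted feedback from $\bar{b}_k$ in the ideal chain, an additional Lipschitz term in $\mathcal{B}$ would be required, which falls outside the statement. I would also note that the safety scaling $\gamma$ does not enter \eqref{lp-23} and \eqref{lp-33}, so no further terms arise from it.
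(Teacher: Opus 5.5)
Your proposal is correct and matches the paper's proof: you write $\tilde{P}_k$ as the difference $g(\bar{b}_k,\bar{I}_{r_k})-g(b_k,I_{r_k})$, apply Assumption 2, and bound $\|\bar{I}_{r_k}-I_{r_k}\|\le L_c\|\delta_\lambda\|$ via Assumption 1, which is the same chain the paper uses. Your remarks that $\mathcal{B}_k$ is common to both \eqref{lp-21} and \eqref{lp-31}, and that the safety scaling $\gamma$ does not enter \eqref{lp-23} or \eqref{lp-33}, make explicit something the paper relies on without stating.
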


\begin{proof}
Considering \eqref{lp-32} and \eqref{lp-22} and Assumption 1, we can see that $\bar{I}_{r_k}- {I}_{r_k} = hf(\lambda_k + \delta_{\lambda}, \mathcal{B}_k) -  hf(\lambda_k, \mathcal{B}_k)$. Then, using \eqref{lip-1}, we can write $\left\|\bar{I}_{r_k}- {I}_{r_k}\right\| \leqslant L_c \left\| \delta_{\lambda} \right\|$. Next, considering \eqref{lp-33} and \eqref{lp-23}, we can write $\tilde{P}_k = g(\bar{b}_k,\bar{I}_{r_k}) - g({b}_k, {I}_{r_k})$. Then, considering Assumption 2 and then Assumption 1, we can arrive at the upper bound.
\end{proof}

\vspace{2mm}

\begin{thmm}
Consider the \textit{ideal} recommendation ($\bar{v}_{{r}}$) generated by \eqref{lp-21}-\eqref{lp-24} and \textit{actual} recommendation (${v}_{{r}}$) generated by \eqref{lp-31}-\eqref{lp-34}. If Assumptions 1 and 2 are true, the velocity recommendation error $e_{v_{k}}$ is bounded by battery model inaccuracy $\delta_b$ and LLM inaccuracy $\delta_\lambda$, in the following sense: 
\begin{align}
    & \left\| e_{v_{k}} \right\|^2 \! \leqslant \! \phi_{k,0} \left\| e_{v_{0}} \right\|^2 \!+\!
     \textstyle\sum_{j=0}^{k-1} \phi_{k,j+1} \{L_1 (\gamma^k \left\| e_{b_0} \right\|^2 + \nonumber \\
     &   \frac{3}{1\!-\!\gamma} ( \left\| \delta_b \right\|^2 \left\| \bar{m} \right\|^2  \!+\! \left\| B \right\|^2 L^2_c \left\| \delta_{\lambda} \right\|^2))^{0.5} \!+\! L_2 L_c \left\| \delta_{\lambda} \right\|\}^2, \label{bnd}
\end{align}
where $\gamma = \lambda_{max}(A^TA-I)+2 \left\|A\right\|^2+1$ with $\lambda_{max}(.)$ representing the maximum eigen-value, and $\phi_{k,0} = \prod_{j=0}^{k-1} \frac{2 {\gamma^2_1}_k}{1-{\lambda_2}_j}$ with ${\gamma_1}_k = K_1/{v}_{{r}_k}$, ${\lambda_2}_k=(-1+2{\lambda^2_1}_k)$, ${\lambda_1}_k = (1-K_2(\bar{v}_{{r}_k} + {v}_{{r}_k})-K_1\bar{P}_r/(\bar{v}_{{r}_k}{v}_{{r}_k}))$.
\end{thmm}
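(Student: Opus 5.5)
The argument is a two-stage cascade. First, a discrete-time Lyapunov bound is obtained for the battery-state error $e_{b_k}$. This bound is then fed, through Lemma 1, into a second Lyapunov recursion for the velocity error $e_{v_k}$, which is unrolled.

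\textbf{Battery error.} Subtracting \eqref{lp-33} from \eqref{lp-23} gives
\begin{align*}
e_{b_{k+1}} = A e_{b_k} + \delta_A \bar{b}_k + \delta_B \bar{I}_{r_k} + B(\bar{I}_{r_k}-I_{r_k}).
\end{align*}
I will write $\delta_b=[\delta_A,\delta_B]$ and $\bar m_k=[\bar b_k^T,\bar I_{r_k}^T]^T$, so the middle two terms become $\delta_b\bar m_k$, with $\|\bar m_k\|\le\|\bar m\|$ assumed bounded. The candidate Lyapunov function is $V_b=e_b^Te_b$. Applying $\|x+y+z\|^2\le 3(\|x\|^2+\|y\|^2+\|z\|^2)$ to the perturbation terms, and Young's inequality to the cross terms between $Ae_{b_k}$ and the perturbations, gives
\begin{align*}
\Delta V_b \le e_{b_k}^T(A^TA-I)e_{b_k} + (\text{cross terms}) + 3\bigl(\|\delta_b\|^2\|\bar m\|^2+\|B\|^2L_c^2\|\delta_\lambda\|^2\bigr).
\end{align*}
Here the bound $\|\bar I_{r_k}-I_{r_k}\|\le L_c\|\delta_\lambda\|$ comes from Assumption 1, as in the proof of Lemma 1. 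Collecting the quadratic terms in $e_{b_k}$ yields $V_{b,k+1}\le \gamma V_{b,k}+3(\cdots)$ with $\gamma=\lambda_{\max}(A^TA-I)+2\|A\|^2+1$. Iterating and summing the geometric series, under the implicit requirement $\gamma<1$, gives
\begin{align*}
\|e_{b_k}\|^2\le\gamma^k\|e_{b_0}\|^2+\tfrac{3}{1-\gamma}\bigl(\|\delta_b\|^2\|\bar m\|^2+\|B\|^2L_c^2\|\delta_\lambda\|^2\bigr).
\end{align*}
Taking square roots and inserting this into Lemma 1 produces the braced term $\{\cdot\}$ in \eqref{bnd}, which bounds $\|\tilde P_k\|$. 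The factor $L_1$ appears outside the square root because it multiplies $\|e_{b_k}\|$ in Lemma 1.

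\textbf{Velocity error.} Subtracting \eqref{lp-34} from \eqref{lp-24}, I will use $\bar v^2-v^2=(\bar v+v)e_v$ and
\begin{align*}
\frac{\bar P}{\bar v}-\frac{P}{v}=\frac{\tilde P}{v}-\frac{\bar P\, e_v}{\bar v v}.
\end{align*}
This gives the scalar linear-time-varying recursion
\begin{align*}
e_{v_{k+1}}=\lambda_{1_k}e_{v_k}+\gamma_{1_k}\tilde P_k,
\end{align*}
with $\lambda_{1_k}$ and $\gamma_{1_k}=K_1/v_{r_k}$ exactly as defined in the statement. With $V_v=e_v^2$, the bound $(a+b)^2\le 2a^2+2b^2$ gives $V_{v,k+1}\le 2\lambda_{1_k}^2V_{v,k}+2\gamma_{1_k}^2\|\tilde P_k\|^2$. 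Since $2\lambda_{1_k}^2=1+\lambda_{2_k}$, this is
\begin{align*}
\Delta V_v\le \lambda_{2_k}V_{v,k}+2\gamma_{1_k}^2\|\tilde P_k\|^2.
\end{align*}
I will then form a one-step contraction of the form $V_{v,k+1}\le \phi_k V_{v,k}+\phi_k\|\tilde P_k\|^2$, with per-step factor $2\gamma_{1_k}^2/(1-\lambda_{2_k})$, matching the stated $\phi$. Unrolling gives $\phi_{k,0}\|e_{v_0}\|^2$ plus the convolution sum $\sum_j\phi_{k,j+1}\|\tilde P_j\|^2$. Substituting the $\tilde P$ bound from the previous stage completes \eqref{bnd}.

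\textbf{Main obstacle.} The hardest step is the bookkeeping that makes the constants come out exactly in the stated form. The rescaling that produces the factor $2\gamma_1^2/(1-\lambda_2)$ requires $\lambda_{2_k}<1$ (that is, $|\lambda_{1_k}|<1$), and it needs an algebraic step that tolerates some looseness. The value of $\gamma$ likewise depends on how the cross terms are split. I would first try Young's inequality with unit weights and then check that the stated $\gamma$ emerges. If it does not, I would accept $\gamma$ as any contraction rate for which the same argument works. I would also state explicitly the implicit conditions the argument relies on: $\gamma<1$, $\lambda_{2_k}<1$, $v_{r_k}>0$, and boundedness of $\bar m$.
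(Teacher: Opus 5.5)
Your cascade follows the paper's own proof step for step: the battery-error recursion with $\|e_{b_k}\|^2$ and a geometric sum, Lemma 1 for $\tilde P_k$, the scalar recursion $e_{v_{k+1}}=\lambda_{1_k}e_{v_k}+\gamma_{1_k}\tilde P_k$, a squared Lyapunov function, then unrolling. Everything through $V_{v,k+1}\le(1+\lambda_{2_k})V_{v,k}+2\gamma_{1_k}^2\|\tilde P_k\|^2$ is correct, and you have the right sign on $\lambda_2$ where the paper's \eqref{ab-1} has $-\lambda_2$.

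The gap is the step you flag as the main obstacle, and extra looseness cannot close it. A per-step factor $2\gamma_{1_k}^2/(1-\lambda_{2_k})$ would require $2\gamma_1^2/(1-\lambda_2)\ge 1+\lambda_2$, i.e.\ $\gamma_1^2\ge 2\lambda_1^2(1-\lambda_1^2)$. In fact \eqref{bnd} is false. Take $\delta_b=0$, $\delta_\lambda=0$, $e_{b_0}=0$ and $e_{v_0}\neq 0$. Then $\bar I_{r_k}=I_{r_k}$, $\bar b_k=b_k$ and $\tilde P_k\equiv 0$, so every brace in \eqref{bnd} vanishes. Also $e_{v_1}=\lambda_{1_0}e_{v_0}$ exactly, so at $k=1$ the claim reads $\lambda_{1_0}^2\le \gamma_1^2/(1-\lambda_{1_0}^2)$. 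Here $\gamma_1=\Delta t/(m v_r)$ and $1-\lambda_1$ are both proportional to $\Delta t$, so the right-hand side is $O(\Delta t)$ while $\lambda_{1_0}^2\to 1$. The claim therefore fails for every sufficiently small sample time, and by orders of magnitude for passenger-car data. The paper does not bridge this either: it asserts that \eqref{ab-2} leads to \eqref{ab-3111}, but unrolling \eqref{ab-2} only produces products of its coefficient $1-\lambda_{2_j}$, never the ratio appearing in $\phi_{k,0}$.

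What your recursion actually proves is
\[
\|e_{v_k}\|^2\le\Big(\prod_{j=0}^{k-1}2\lambda_{1_j}^2\Big)\|e_{v_0}\|^2+\sum_{j=0}^{k-1}\Big(\prod_{i=j+1}^{k-1}2\lambda_{1_i}^2\Big)2\gamma_{1_j}^2\{L_1\|e_{b_j}\|+L_2L_c\|\delta_\lambda\|\}^2 .
\]
This bound is correct but weak: $2\lambda_1^2>1$ whenever $\lambda_1^2>1/2$, so it grows geometrically although the true error contracts. For a contracting bound, unroll $|e_{v_{k+1}}|\le|\lambda_{1_k}||e_{v_k}|+\gamma_{1_k}\|\tilde P_k\|$ directly. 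Alternatively, use $(a+b)^2\le(1+\epsilon)a^2+(1+1/\epsilon)b^2$ with $(1+\epsilon)\lambda_{1_k}^2<1$.

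Two smaller points.
\begin{enumerate}
\item In the battery stage, unit-weight Young gives the wrong constants. The stated $\gamma$ equals $3\|A\|^2$, since $\lambda_{\max}(A^TA-I)+1=\|A\|^2$. It comes, as in the paper, from applying $\|x+y+z\|^2\le 3(\|x\|^2+\|y\|^2+\|z\|^2)$ to $Ae_{b_k}$, $\delta_b\bar m_k$ and $B(\bar I_{r_k}-I_{r_k})$ together. Your implicit condition $\gamma<1$ therefore means $\|A\|<1/\sqrt{3}$.
\item Inserting the battery bound at time $j$ gives $\gamma^j\|e_{b_0}\|^2$ in the $j$-th summand, not the $\gamma^k$ written in \eqref{bnd}. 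For $\gamma<1$ the stated version is smaller and does not follow, so this should be corrected, not asserted as matching.
\end{enumerate}
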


\begin{proof}
First, we consider the dynamics of the error $e_{b_k} = \bar{b}_{k}-{b}_{k}$.
 Subtracting \eqref{lp-33} from \eqref{lp-23}, we formulate the battery state error difference equation as:
\begin{align}
    e_{b_{k+1}} & = {\bar{b}}_{k+1}-{{b}}_{k+1} = A e_{b_{k}} + \delta_b m_k + B (\tilde{f}\tilde{h})
     \label{ly-1}
\end{align}
where $\delta_b = [\delta_A, \delta_B]$, $m_k = [\bar{b}_{k}, \bar{I}_{r_k}]^T$ and $\tilde{f}\tilde{h} = hf(\theta_k+\delta_\theta, P_k+\delta_P, \mathcal{B}_k) - hf(\theta_k, P_k, \mathcal{B}_k )$

Considering the Lyapunov function candidate $W_{b_k} = \left\| e_{b_{k}} \right\|^2$, we can write:
\begin{align}
    & \Delta W_b = W_{b_{k+1}} - W_{b_k} = (\lambda_{max}(A^TA-I)+2 \left\|A\right\|^2) \left\| e_{b_{k}} \right\|^2 \nonumber\\
    & + 3 \left\| \delta_b \right\|^2 \left\| \bar{m} \right\|^2 + 3\left\| B \right\|^2 L^2_c \left\| \delta_{\lambda} \right\|^2, \label{int-1}
\end{align}
Under the condition 
\begin{align} 
& \left\| e_{b_{k}} \right\|^2 \geqslant \frac{3 (\left\| \delta_b \right\|^2 \left\| \bar{m} \right\|^2 
+ 3\left\| B \right\|^2 L^2_c \left\| \delta_{\lambda}\right\|^2)} {-(\lambda_{max}(A^TA-I)+2 \left\|A\right\|^2)},
\end{align}
we have $\Delta W_b \leqslant 0$ which ensures that $\left\| e_{b_{k}} \right\|$ will be bounded. From \eqref{int-1}, we can write
\begin{align}    
    & {W_b}_{k+1} \leqslant \gamma W_{b_k} + 3 \left\| \delta_b \right\|^2 \left\| \bar{m} \right\|^2 + 3\left\| B \right\|^2 L^2_c \left\| \delta_{\lambda} \right\|^2, \label{lya-22}
\end{align}
The difference inequality \eqref{lya-22} results in
\begin{align}
    & {W_b}_{k} \leqslant \gamma^k W_{b_0} +  \frac{3}{1-\gamma} ( \left\| \delta_b \right\|^2 \left\| \bar{m} \right\|^2 + \left\| B \right\|^2 L^2_c \left\| \delta_{\lambda} \right\|^2), \nonumber
\end{align}
from which we can write
\begin{align}
    & \left\| {e_b}_{k} \right\|^2 \leqslant \gamma^k \left\| e_{b_0} \right\|^2 +  \frac{3}{1-\gamma} ( \left\| \delta_b \right\|^2 \left\| \bar{m} \right\|^2 + \left\| B \right\|^2 L^2_c \left\| \delta_{\lambda} \right\|^2), \label{abc-1}
\end{align}
Next, we write the error dynamics of $e_{v_{k}}=\bar{v}_{{r}_k} - {v}_{{r}_k}$.
\begin{align}
    e_{v_{k+1}} & = \lambda_1 e_{v_{k}} + \gamma_1 \tilde{P}_k 
     \label{ly-222}
\end{align}
Considering the Lyapunov function candidate $W_{v_k} = \left\| e_{v_{k}} \right\|^2$, we can write:
\begin{align}
     \Delta W_v = W_{v_{k+1}} - W_{v_k} & = \left\| e_{v_{k+1}} \right\|^2 - \left\| e_{v_{k}} \right\|^2 \nonumber\\ 
    & = -\lambda_2 e^2_{v_{k}} + 2 \gamma^2_1 \tilde{P}^2_k, \label{ab-1}
\end{align}
Under the condition $\left\| e_{v_{k}} \right\| \geqslant (\sqrt{2}\gamma_1/\lambda_2) \left\| \tilde{P}_k \right\|$, we have $\Delta W_v \leqslant 0$ which ensures that $\left\| e_{v_{k}} \right\|$ will be bounded. From \eqref{ab-1} and Lemma 1, we can further write:
\begin{align}
    & W_{v_{k+1}} \leqslant (1-\lambda_2) W_{v_k} + 2 \gamma^2_1 \{L_1 \left\|e_{b_{k}}\right\| + L_2 L_c \left\| \delta_{\lambda} \right\|\}^2, \label{ab-2}
\end{align}
The difference inequality \eqref{ab-2} will lead to
\begin{align}
    & W_{v_k} \!\leqslant \! \phi_{k,0} W_{v_0} \!+\! \textstyle\sum_{j=0}^{k-1} \phi_{k,j+1} \{L_1 \left\|e_{b_{j}}\right\| + L_2 L_c \left\| \delta_{\lambda} \right\|\}^2, \label{ab-3111}
\end{align}
Using \eqref{abc-1}, \eqref{ab-3111} can be re-written as \eqref{bnd}.
\end{proof}

\begin{figure*}[ht]
    \centering    
    \includegraphics[width=\textwidth,trim={3.5cm 0 1.7cm 0},clip]{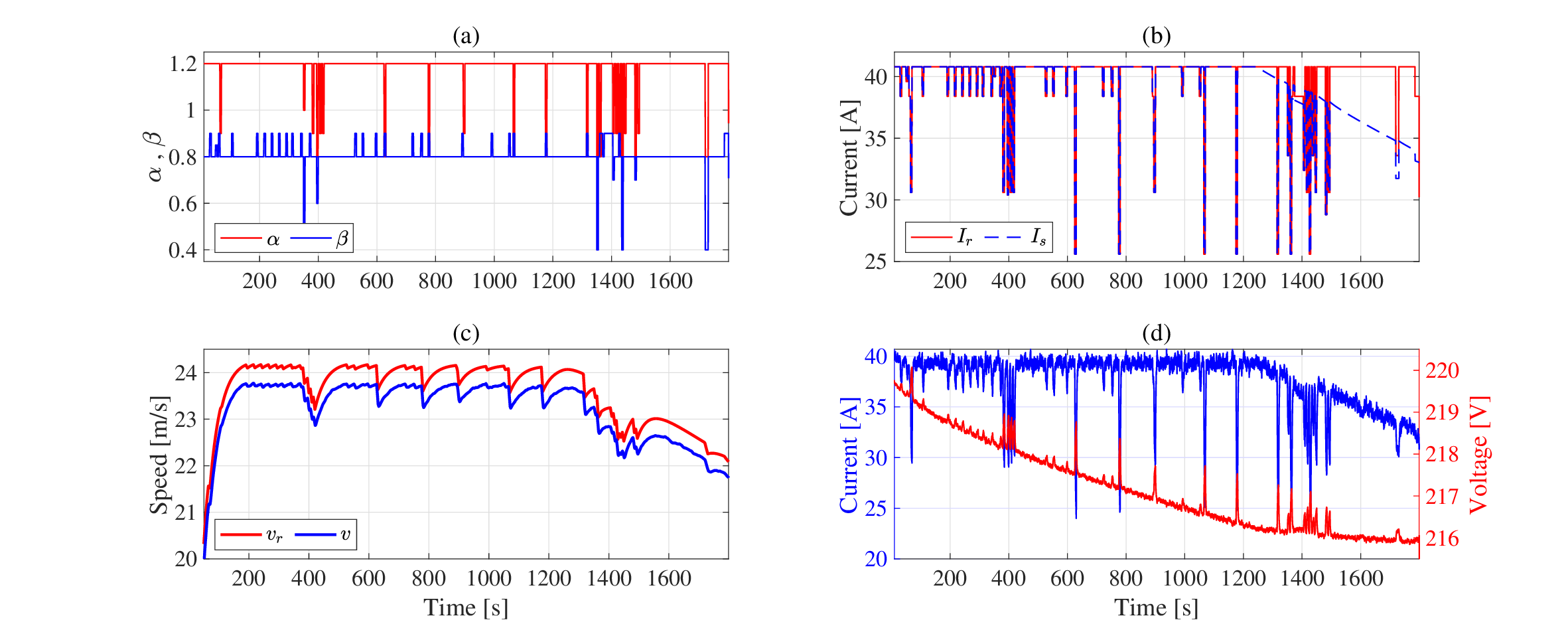}
    \caption{Closed-loop performance of the proposed LLM–physics driving management framework for a 30 minutes drive -- with LLM getting battery feedback and recommending $\alpha, \beta$ every 5 seconds. (a): Evolution of LLM-generated behavioral parameters $\alpha$ and $\beta$. (b): Recommended current $I_r$ from LLM and safety filtered current $I_s$. (c): Actual vehicle velocity $v$ tracking the recommended velocity $v_r$. (d): EV battery current and voltage trajectories. }
    \label{results2}
\end{figure*}

\section{Results and Discussion}


This section evaluates the performance and robustness of the proposed driving management framework. We consider a driving scenario in which the LLM adapts the discharge envelope every 5 seconds. The implementation follows a co-simulation architecture in which the LLM reasoning layer is executed in Python 3.9.6, while the vehicle dynamics, battery electrochemical model, safety filter, and closed-loop enforcement layers are implemented in MATLAB 2024b. We present two sets of results: performance under adaptive LLM recommendations, and robustness analysis under LLM prompt and battery model inaccuracies.

\noindent\textbf{Performance under adaptive LLM recommendations:} To evaluate interaction between semantic reasoning and physics-based constraint enforcement, LLM was provided with multi-modal inputs consisting of (i) structured numerical battery summaries (from MATLAB) and (ii) textual descriptions of user intent. The numerical summary included minimum pack voltage, mean and peak discharge current, instantaneous and mean speed, and state-of-charge. These values were embedded directly into the prompt at each supervisory update interval. The LLM instruction also specifies to output only valid JSON for $\alpha$ and $\beta$. In Fig. \ref{results2}(a), LLM-generated behavioral parameters $\alpha_k$, $\beta_k$ are shown over time. The LLM receives recent battery feedback -- including peak current and voltage margin -- and it is observed that LLM proactively reduces $\alpha$ and increases $\beta$ when sees falling battery voltage (Fig. \ref{results2}(d)) indicating a conservative adaptation. 

Figure \ref{results2}(b),compares the requested current envelope $I_r$ and safety filtered current $I_s$. The safety filter enforces voltage constraints and physical limits as defined in Section 2.3. The figure shows: when voltage margin is sufficient $I_s \approx I_r$, as voltage approaches the lower bound, scaling reduces $I_s$. This behavior confirms that the safety filter operates as an enforcement layer independent of LLM reasoning, even when the LLM recommends aggressive discharge (large $\alpha$), physical constraints are strictly maintained as discussed in section 2.3. Figure \ref{results2}(c) shows the recommended velocity trajectory ($v_r$) and the actual vehicle velocity ($v$). The vehicle tracks the recommended velocity with bounded error. Despite adaptive changes in current envelope, the longitudinal dynamics remain stable and well-behaved. Figure \ref{results2}(d) further shows EV battery current and voltage evolution. As discharge progresses, voltage gradually decreases. Figure \ref{results2} confirms that the LLM–physics integration achieves contextual adaptation without compromising safety or stability.

\begin{figure*}[ht]
    \centering    
    \includegraphics[width=\textwidth,trim={4cm 0 3.8cm 0},clip]{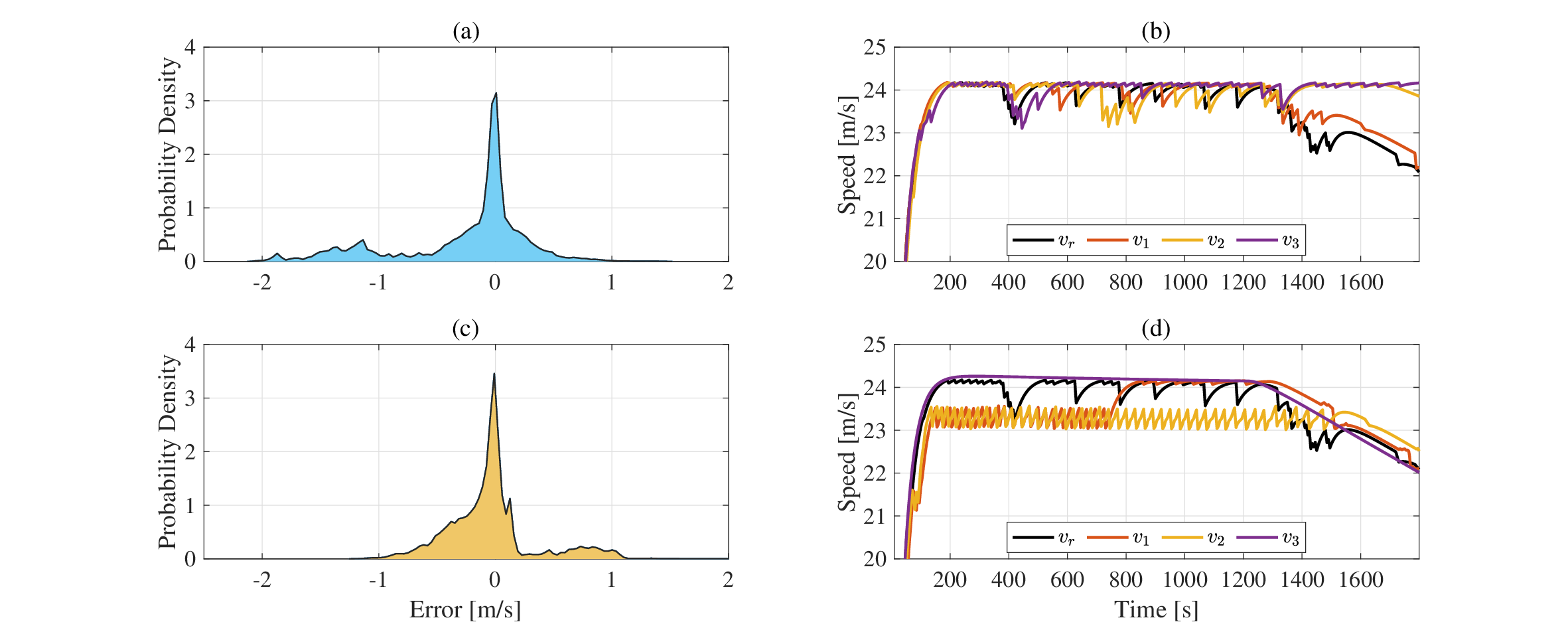}
    \caption{Validation of recommendation error under LLM prompt perturbations and battery model uncertainty for a 30 minutes drive. (a): Probability density of velocity error under battery model perturbations. (b): Comparison of velocity trajectories under recommended ideal battery model ($v_r$) and perturbed ones ($v_1$ with $\delta_A = 0.95\delta_A, \delta_B= 0.95\delta_B$, $v_2$ with $\delta_A = 0.75\delta_A, \delta_B= 0.75\delta_B$, $v_3$ with $\delta_A = 0.5\delta_A, \delta_B=0.5\delta_B$). (c): Probability density function of velocity recommendation error under perturbed LLM prompts. (d): Comparison of recommended velocity trajectories under ideal prompt ($v_r$) and perturbed prompts ($v_1$ with extremely aggressive driving style, $v_2$ with extremely aggressive driving style and frequent acceleration bursts, $v_3$ with a contradictory user prompt as ``I need to reach the store quickly, I’m already late. But I also want to save battery because I might go out again later'').}
    \label{results3}
\end{figure*}

\textbf{Robustness analysis under LLM prompt and battery model inaccuracies:} Next we evaluate how inaccuracies in user prompting and battery modeling propagate through the closed-loop system, as analyzed theoretically in Section 3. To evaluate $\delta_{\lambda}$ we injected mild inconsistencies in user prompts such as: ambiguous driving preference, conflicting urgency statements, incomplete velocity feedback summaries and to evaluate battery model inaccuracies ($\delta_A$ and $\delta_B$) we scale the system matrices in \eqref{ssbatt}. Figure \ref{results3}(a) and (c) shows the probability density of the velocity error distribution under battery model parameter perturbations and  under prompt perturbations respectively. Error distribution is centered near zero, heavy tails remain bounded and no drift or divergence observed. Figure \ref{results3}(b) and (d) shows velocity trajectories under battery parameter perturbations and under multiple perturbed prompts, respectively. The trajectories remain clustered and bounded. These observations align with the inequality \eqref{ly-222}. It also demonstrates that LLM-induced variability does not destabilize the system. To systematically analyze the error between ideal and actual recommendations, different imperfect prompts were added to the nominal prompt. These imperfect prompts are organized into distinct families, summarized in Table \ref{table_prompt_families}.

\begin{table}[h]
\caption{LLM prompt perturbations}
\centering
\begin{tabular}{|l|p{5.8cm}|}
\hline
\textbf{Prompt Family} & \textbf{Semantic Description} \\ 
\hline
Nominal/Ideal & 30-min city trip, mixed driving style \\ 
\hline
Conservative & Smooth driving, preserve battery, etc \\ 
\hline
Aggressive & Running late, fast arrival preferred \\ 
\hline
Ambiguous & Wants to go fast but also drive slowly \\ 
\hline
Layman Error & Informal/vague phrases (e.g., “just go fast”) \\ 
\hline
\end{tabular}
\label{table_prompt_families}
\end{table}

The results demonstrate that the proposed architecture successfully integrates high-level behavioral reasoning (via LLM) with low-level physical enforcement (via physics-driven safety filter). The LLM may influence preference and current envelope shaping, but it cannot directly violate physical constraints. 



\section{Conclusion}
This work presents a user-aware EV driving management framework that unifies LLM-based semantic reasoning with electrochemical battery physics. By structurally separating high-level behavioral inference from low-level physical enforcement, the architecture enables contextual adaptation while guaranteeing voltage safety and stability. We analytically derived bounds on recommendation error under model and prompt uncertainty, and simulations validated bounded, constraint-consistent behavior. The results demonstrate that AI-driven intent interpretation can be safely embedded within physics-based control systems.

\bibliographystyle{ieeetr}
\bibliography{ifacconf}             
                                                   







\end{document}